\theoremstyle{plain}
\newtheorem{theorem}{Theorem}
\newtheorem{lemma}[theorem]{Lemma}
\newtheorem{corollary}[theorem]{Corollary}
\newcommand{\aboveAndBelowSkip}{4pt}
\newcommand{\calO}{\mathcal{O}}
\newcommand\tab[1][0.8cm]{\hspace*{#1}}
\newenvironment{tightalign*}
 {
 \setlength{\abovedisplayshortskip}{\aboveAndBelowSkip}\setlength{\abovedisplayskip}{\aboveAndBelowSkip}
 \setlength{\belowdisplayshortskip}{\aboveAndBelowSkip}\setlength{\belowdisplayskip}{\aboveAndBelowSkip}
 \csname align*\endcsname}
 {\endalign}
\title{The Mobile Server Problem
\thanks{
A preliminary version of this paper appeared in SPAA'17~\cite{Feldkord17}.
\newline
This work is partially supported by the German Research Foundation (DFG) within the Collaborative Research Center ``On-The-Fly Computing'' (SFB 901).}}
\author[]{Bj\"orn Feldkord}
\author[]{Friedhelm Meyer auf der Heide}
\affil[]{Heinz Nixdorf Institute and Department of Computer Science\\
	Paderborn University, F\"urstenallee 11, 33102 Paderborn, Germany
}
\affil[]{ \{bjoern.feldkord, fmadh\}@upb.de}
\date{}
\begin{document}
\hypersetup{pageanchor=false}
\maketitle

\begin{abstract}
  We introduce the mobile server problem, inspired by current trends to move computational tasks from cloud structures to multiple devices close to the end user.
  An example for this are embedded systems in autonomous cars that communicate in order to coordinate their actions.
	
	Our model is a variant of the classical Page Migration Problem.
	More formally, we consider a mobile server holding a data page.
	The server can move in the Euclidean space (of arbitrary dimension).
	In every round, requests for data items from the page pop up at arbitrary points in the space.
	The requests are served, each at a cost of the distance from the requesting point and the server, and the mobile server may move, at a cost $D$ times the distance traveled
	for some constant $D$.
	We assume a maximum distance $m$ the server is allowed to move per round.
	
	We show that no online algorithm can achieve a competitive ratio independent of the length of the input sequence in this setting.
	Hence we augment the maximum movement distance of the online algorithms to $(1+\delta)$ times the maximum distance of the offline solution.
	We provide a deterministic algorithm which is simple to describe and works for multiple variants of our problem.
	The algorithm achieves almost tight
	competitive ratios independent of the length of the input sequence.
	
	Our Algorithm also achieves a constant competitive ratio without resource augmentation in a variant
	where the distance between two consecutive requests is restricted to a constant smaller than the limit for the server.
\end{abstract}

\thispagestyle{empty}
\clearpage
\setcounter{page}{1}
\pagebreak
\hypersetup{pageanchor=true}

%%%%%%%%%%%%%%%%%%%%%%%%%%%%%%%%%%%%%%%%%%%%%%%%%%%%%%%%%%%%%%%%%%%%%%%%%%%%%%%%%%
%%%%%%%%%%%%%%%%%%%%%%%%%%%%%%%%%%%%%%%%%%%%%%%%%%%%%%%%%%%%%%%%%%%%%%%%%%%%%%%%%%

\section{Introduction}\label{sec:intro}
Motivated by their large consumption of computational resources, a growing number of applications
were shifted from a single machine at some end user
to large computing centers.
These centers may have networks of processors working on a common memory to execute some
common computational task.
This development has motivated lots of research problems regarding a static network of
machines with some common resources such as memory and bandwidth of common
communication channel (see~\cite{Jansen16, Kling17} for recent examples in the area of scheduling).

Recently, another trend can be observed where the computation is shifting back to the user.
This is commonly referred to as "edge computing" and involves a dynamic network with lots of
(mobile) devices which are located close to the user.
The machines may even be embedded systems such as in autonomous cars which need to share
data in order to coordinate.
As a result, the structure of the communication network might be much more complex in the sense
that devices may join and leave the network and data may be shifted to nodes which are altering
their physical location throughout the computation.
For examples, see~\cite{Ahmed16,Davis04}.

The Page Migration Problem is a simple model for approaching the problem with shared memory in
a cloud computing scenario.
In the classical version of this problem we consider a memory page which is shared
by multiple processors which are connected in a network. Only one processor
can hold the page at a time.
Processors request data from the page which incurs cost proportional to the distance
within the network.
In order to reduce these costs, the page may be moved to another processor; this however
incurs cost proportional to the distance in the network times the size of the page.

In this paper, we present with the Mobile Server Problem a simple model which captures both the idea 
of Page Migration, namely modeling costs for accessing data items from an indivisible data page,
and allowing mobility of data servers in order to improve the overall performance of the system.

Abstracting from specific network topologies, we replace the network graph with the Euclidean space,
such that, at every step, an arbitrary (finite) number of requests for data items can appear anywhere in the space.
In order to improve the overall performance, the mobile server holding the memory page can move to any point in the plane.
For the purpose of bounding the time needed for a round, we limit the allowed distance the server can move in a step.
The cost for a step is composed of the sum of the distances between the mobile server and the positions of the clients issuing the requests,
plus the distance the server moves, weighted with a cost value $D$.

The Page Migration Problem is naturally often considered as an online problem
as data requests may not be known at the start of a larger computation.
This is especially true in our scenario where even the participating
devices may not be known in advance.
Hence, we also consider the Mobile Server Problem as an online problem.

We mainly consider the variant where, in a step, a server may move knowing the position of the clients, but answer their requests afterwards.
We also discuss other variants.

\subsection{Related Work}
The Page Migration Problem was first considered as an online problem by Black and Sleator~\cite{Black89}
who gave 3-competitive algorithms for arbitrary trees and the complete graph.
It was also shown that 3 is the best possible competitive ratio even if the given network just
consists of two processors. This lower bound also holds for randomized algorithms against adaptive
adversaries.
The first deterministic algorithm for general graphs was given by Westbrook in 1994~\cite{Westbrook94}
which was called \textit{Move-To-Min} as the strategy is to move to the optimal point in regards to
the last $D$ requests.
The competitive ratio of this algorithm is 7.
For the randomized solutions, there is a simple 3-competitive algorithm called the \textit{Coin-Flip Algorithm}
which is simple to describe and also works against adaptive online adversaries.
A more involved solution gives a 2.62-competitive algorithm against oblivious adversaries.
These algorithms can be found in a paper by Westbrook~\cite{Westbrook94}.
An overview over these results including better results for the deterministic case can be found in a
survey by Yair Bartal~\cite{Bartal98}.

The Page Migration Problem belongs to the class of \textit{relaxed task systems}. This was used to
derive a deterministic solution for the problem with multiple copies of the page. The adaption within the framework
is made from an online algorithm for the $k$-Server Problem~\cite{Bartal01}.
The $k$-Server Problem can be formulated in terms of the Page Migration Problem with the restriction that
requests which appear in the network must be satisfied by moving one of the $k$ identical copies of the page
to the location of the request. The competitive ratio of the problem is shown to be $\Omega(k)$.
An overview of the most important results can be found in a survey by Elias Koutsoupias~\cite{Koutsoupias09}.
A recent paper by Böckenhauer et.~al. also studied the advice complexity of the $k$-Server Problem where they
design a $\frac{1}{1-2\sin(\frac{\pi}{2^b})}$-competitive algorithm reading $b$ bits per time step~\cite{Boeckenhauer17}.

In contrast to the $k$-Server model, Albers and Koga studied a variant with multiple different pages (only one copy per page)
but where the processors only can hold one page at a time. The competitive ratio for this model can be bounded by a constant~\cite{Albers95}.

The Page Migration Problem has already been studied within the Manhattan and Euclidean plane, but without restricting the moving
distance of the page in each time step~\cite{Chrobak97, Khorramian16}.
The Manhattan and Euclidean plane were also considered for the 3-Server Problem for which (almost) optimal online algorithms
were given~\cite{Bein02}.

All the mentioned variants so far assumed a static network where the distances between the nodes do not change over time.
Bienkowski et.~al. considered a scenario where distances between nodes could change over time. The change in distance was
done either by an adversary or determined by a stochastic process. In this scenario the competitive ratio depends
both on the size of the page and the number of processors~\cite{Bienkowski09}.

Since the competitive ratio for our problem does depend on the length of the input sequence, we analyze it using the concept
of \textit{resource augmentation}. This technique of giving the online algorithm slightly more power in some sense to
get a bounded competitive ratio was first used by Kalyanasundaram and Pruhs for scheduling problems where the online algorithms
were given slightly faster processors than the offline solution to improve from an unbounded ratio to a competitive ratio
only dependent on the augmentation factor~\cite{Pruhs95}.

\subsection{Our Contribution}
We present the Mobile Server Problem introduced above (and formally described in Section~\ref{secModel}).
Our first results are lower bounds for the competitive ratio:
We prove that no online algorithm can achieve a competitive ratio independent of the number of rounds for this problem.

We therefore consider the problem in a setting where the maximum distance a server may move is augmented by a factor of $(1+\delta)$
for the online algorithm.
We show that a simple algorithm is sufficient to achieve a competitive ratio independent of the length of the input sequence
for several variants of the problem.
In particular, we describe a deterministic algorithm which is $\mathcal{O}(\frac{1}{\delta^{\nicefrac{3}{2}}})$-competitive
in the Euclidean Plane when the number of requests per round is fixed.
We also briefly sketch how to modify our analysis to work for a varying number of requests
per round and for a scenario where requests must be answered before the page can be moved.

To complement these results we show lower bounds for all of the variants which also hold
for randomized algorithms against oblivious adversaries. Overall we get tight bounds for
the 1-dimensional Euclidean space and almost tight bounds, up to a factor of $\frac{1}{\sqrt{\delta}}$,
for arbitrary dimensions.

We also consider a variant where the requests are posed by agents in the network who also move at a limited speed.
We show that if our server can move at least as fast as these agents, our algorithm from before has a constant
competitive ratio, even without the use of resource augmentation.

%%%%%%%%%%%%%%%%%%%%%%%%%%%%%%%%%%%%%%%%%%%%%%%%%%%%%%%%%%%%%%%%%%%%%%%%%%%%%%%%%%
%%%%%%%%%%%%%%%%%%%%%%%%%%%%%%%%%%%%%%%%%%%%%%%%%%%%%%%%%%%%%%%%%%%%%%%%%%%%%%%%%%
\section{Our Model}\label{secModel}
The model of the Mobile Server Problem adapts some notation from the Page Migration Problem
to allow for a better understanding and an easy comparison of the results.

We consider a mobile server holding a memory page, located at a point $P_t$ in the Euclidean Plane at time $t$.
Time is discrete and divided into steps.
We refer to the length of an input sequence as the number of time steps denoted by $T$.

In each time step, clients can request data items from the page.
Let $r_t$ be the number of clients requesting data items in step $t$.
These are represented by points $v_{t,1},\ldots,v_{t,r_t}$ in the plane.
The server can move in every time step by a distance of at most $m$, i.e. $d(P_t,P_{t+1})\leq m$.

The cost for answering a request issued at position $v$ when the server is located at position $P$ is $d(P,v)$.
Moving the server from position $P_t$ to $P_{t+1}$ induces cost $D\cdot d(P_t,P_{t+1})$ for some constant $D\geq 1$.
An algorithm operating on an input sequence for the problem can decide in each step
where to move the page under the given restriction of the distance.

The total costs of an algorithm $Alg$ on a given input sequence are defined as follows:
$$C_{Alg}=\sum_{t=1}^{T}\left(D\cdot d(P_t,P_{t+1})+\sum_{i=1}^{r_t}d(P_{t+1},v_{t,i})\right)$$
Note that in this definition, the page may be moved upon knowing the current requests.
The requests are however served after the page has been moved, hence their answering costs are proportional to
the distance to $P_{t+1}$.

In the following sections we will also discuss a variant we refer to as the Answer-First Variant, where
the requests have to be served before moving the server.
The costs of the algorithm in this case are
$$C_{Alg}=\sum_{t=1}^{T}\left(\sum_{i=1}^{r_t}d(P_{t},v_{t,i}) + D\cdot d(P_t,P_{t+1})\right).$$
We will observe that this small modification of the problem has a huge impact on the best possible
competitive ratio.

%%%%%%%%%%%%%%%%%%%%%%%%%%%%%%%%%%%%%%%%%%%%%%%%%%%%%%%%%%%%%%%%%%%%%%%%%%%%%%%%%%
%%%%%%%%%%%%%%%%%%%%%%%%%%%%%%%%%%%%%%%%%%%%%%%%%%%%%%%%%%%%%%%%%%%%%%%%%%%%%%%%%%
\section{Lower Bounds}

In this section we show how the different parameters of our model influence the quality of the best
possible approximation by an online algorithm.
In this paper we only present a deterministic online algorithm for the Mobile Server Problem, however we formulate the lower bounds
for the expected competitive ratio of randomized online algorithms against oblivious adversaries.
These lower bounds carry over to the deterministic case and to randomized online algorithms against stronger (adaptive) adversaries.
It should also be noted that these lower bounds hold in the Euclidean space for an arbitrary dimension.

For the lower bounds we use Yao's Min-Max Principle~\cite{Yao77} which allows us to construct the lower bounds by generating a randomized input sequence
and examine the expected competitive ratio of a deterministic online algorithm.
We do this in all of the following three theorems and hence omit a reference to this technique in the individual proofs.

First we show that without use of resource augmentation, there does not exist an online algorithm with a competitive ratio independent of $T$.

\begin{theorem}
\label{ThInfinite}
Every randomized online algorithm for the Mobile Server Problem has a competitive ratio of $\Omega(\nicefrac{\sqrt{T}}{D})$ against an oblivious adversary,
even if there is only one request per time step.
\end{theorem}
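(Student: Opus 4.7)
The plan is to apply Yao's minimax principle and design a distribution over inputs on which every deterministic online algorithm has expected cost at least $\Omega(\sqrt{T}/D)$ times the expected offline optimum; this then transfers to randomized algorithms against oblivious adversaries. Since the theorem is stated for arbitrary dimension, it suffices to construct the distribution on the real line, with the server starting at the origin.

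I would partition the $T$ rounds into $k = T/\ell$ equal-length phases of $\ell$ rounds each, and within each phase place all requests at a common location $X_i$ drawn i.i.d.\ uniformly from a symmetric pair $\{-L,+L\}$. The parameters $\ell$ and $L$ are to be tuned as functions of $m$, $D$, and $T$. The distribution is designed to exploit the sole asymmetry between online and offline: offline knows the entire sequence $X_1,\dots,X_k$ in advance, so it can begin moving toward each next location $L/m$ rounds before the phase boundary, spreading each transit over the boundary and essentially halving the request cost paid during transits. Any deterministic online algorithm, in contrast, only learns $X_i$ upon seeing the first request of phase $i$, and hence its position at that moment is independent of $X_i$; by symmetry of the distribution over $\{-L,+L\}$, the server is at expected distance $\Omega(L)$ from the request for at least the first $\Omega(L/m)$ rounds of every phase, regardless of the online policy.

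With the cost contributions of the two algorithms expressed in closed form (offline's is dominated by movement, online's is dominated by request cost during the first portion of each phase), the final step is to tune $\ell$ and $L$ so that the ratio reaches $\Omega(\sqrt{T}/D)$. The main obstacle I foresee is that the naive two-point phase construction only yields a constant-factor gap, because online can sit at the origin and hedge at per-round cost $L$, which matches the static offline benchmark. Pushing the gap all the way to $\sqrt{T}/D$ will likely require making hedging expensive: either by enriching the candidate-location set so that no single hedge point is simultaneously close to all possible targets, or by layering a second source of randomness (such as a rare far-away ``burst'' round) that offline can position itself for in advance while online cannot. Proving the lower bound uniformly over every deterministic online strategy---including intermediate ones that partially hedge and partially chase---is where the technical weight of the argument sits, and I would likely handle it via a case analysis on the server's average deviation from the phase location within each phase, ruling out both extreme policies at once.
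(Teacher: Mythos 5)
Your proposal does not reach the theorem, and you essentially concede this yourself: the i.i.d.\ two-point phase construction gives only a constant-factor gap, and the two fixes you sketch (a richer candidate set, a rare burst round) are left entirely undeveloped and in fact do not lead to a $\sqrt{T}$ dependence either. The structural reason your approach fails is that it is memoryless across phases: whatever advantage the offline player gains by pre-positioning is bounded per phase by a quantity comparable to its own per-phase cost, so the regret cannot accumulate with $T$. Any construction in which the online server can periodically ``catch up'' to the request location is doomed, because the unrestricted Page Migration Problem is $O(1)$-competitive; the $\Omega(\sqrt{T}/D)$ bound must come from the speed cap itself.

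The missing idea is a single, permanent separation rather than repeated temporary ones. The paper's adversary flips one fair coin, then moves its server distance $m$ per round in the chosen direction for the entire sequence, placing the first $x=\sqrt{T}$ rounds of requests at the origin and all subsequent requests at its own (fleeing) server. After $x$ rounds the online server is, with probability $\tfrac12$, at distance at least $xm$ from the adversary's server, and---this is the crucial point---because both servers are capped at speed $m$ and the adversary never stops running, that gap can \emph{never} be closed. The online algorithm therefore pays at least $xm$ per round for the remaining $T-x$ rounds, i.e.\ $\Omega\bigl((T-x)\,xm\bigr)$ in total, while the adversary pays $O(xDm + x^2 m)$ in the first phase and $(T-x)Dm$ afterwards; setting $x=\sqrt{T}$ gives the ratio $\Omega(\sqrt{T}/D)$. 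Your framework has the right opening move (Yao's principle, a binary random direction, working on the line), but without the run-away-forever mechanism the bound you can prove is $O(1)$, not $\Omega(\sqrt{T}/D)$.
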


\begin{proof}
We consider a sequence of $x$ time steps with one request each on the starting position of the server.
Consider two opposite directions from the starting position which we will refer to as left and right.
The adversary decides with probability $\frac{1}{2}$ at the beginning of the first step to either move its server
a distance $m$ to the left or to the right for the first $x$ time steps.
The cost for the adversary is at most $xDm+m\cdot\sum_{i=1}^{x}i\leq xDm+mx^2$ for these first $x$ steps.

After these $x$ steps, with probability $\frac{1}{2}$ the server of the online algorithm has a distance of
at least $xm$ to the server of the adversary.

For the remaining $T-x$ steps of the sequence the adversary issues requests on the position of its server and moves it a distance of $m$ towards
the same direction it already did during the first $x$ steps.
The costs for the adversary are $(T-x)Dm$ while the costs for the online algorithm are at least $(T-x)\cdot xm$ with probability $\frac{1}{2}$.
By choosing $x=\sqrt{T}$ the expected competitive ratio is $\Omega(\frac{\sqrt{T}}{D})$.
\end{proof}

In order to have a chance to achieve a competitive ratio independent from $T$,
we augment the maximum distance by which the server may move in every time step for the respective online algorithm.
We consider online algorithms which, in every round, can move their server by a distance of $(1+\delta)m$ for some $\delta\in(0,1]$.
We do not consider bigger values for $\delta$ in this paper since $1$ is a natural lower bound for all online problems.
Hence asymptotically, for an online algorithm it is sufficient to utilize a distance of at most $2m$ to achieve an
optimal competitive ratio of $\mathcal{O}(1)$.

\begin{theorem}
Let $R_{min}$ and $R_{max}$ be the minimum and maximum number of requests per time step.
Every randomized online algorithm using an augmented maximum moving distance of at most $(1+\delta)m$
has an expected competitive ratio of $\Omega(\frac{1}{\delta}\cdot\frac{R_{max}}{R_{min}})$ against an oblivious adversary.
\end{theorem}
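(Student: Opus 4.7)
My plan is to invoke Yao's Min-Max Principle on a two-phase random input supported in the $1$-dimensional Euclidean space (which embeds isometrically into arbitrary dimensions). A single random sign $\sigma\in\{+1,-1\}$ is drawn uniformly at the outset. In Phase~1, which lasts $L$ rounds, I place $R_{min}$ requests at the origin in every round; in Phase~2, which lasts $K:=\lceil L/\delta\rceil$ rounds, I place $R_{max}$ requests at the single position $\sigma(L+t)m$ in the $t$-th round of the phase. Two design features are essential: Phase~1 is independent of $\sigma$, so the deterministic online algorithm's server position $X_L$ at the end of Phase~1 is a fixed point (only the Phase~2 requests reveal $\sigma$); and in Phase~2 the requests drift outward at exactly the adversary's maximum speed~$m$, so the online server, even with augmented speed $(1+\delta)m$, can shrink the gap only at net rate $\delta m$ per round.

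For the offline side I will analyze the strategy of moving at full speed $m$ in direction $\sigma$ from round~$1$. Its movement cost is $Dm\,(L+K)=\mathcal{O}(LDm/\delta)$; the Phase~1 request cost is $R_{min}\sum_{t=1}^{L}tm=\mathcal{O}(R_{min}mL^2)$; and the Phase~2 request cost is zero because the offline server rides along with the receding requests. A brief sanity check will rule out alternatives such as delaying departure (which would force the offline to trail the Phase~2 requests at a matched speed indefinitely, incurring $\Theta(R_{max}mL^2/\delta)$), establishing the offline optimum at $\mathcal{O}(R_{min}mL^2+LDm/\delta)$.

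The core of the argument is the expected online cost lower bound. Let $Y_t$ be the online server's position after its move in round $t$ of Phase~2, so that $Y_0=X_L$ and the augmented speed constraint yields $|Y_t-X_L|\leq t(1+\delta)m$ for every realization of $\sigma$. By the triangle inequality,
\[ |Y_t-\sigma(L+t)m| \;\geq\; |X_L-\sigma(L+t)m|-t(1+\delta)m. \]
Because $X_L$ is independent of $\sigma$, averaging both sides over $\sigma\in\{+1,-1\}$ and applying $|a-b|+|a+b|\geq 2b$ gives $\mathbb{E}_\sigma[|X_L-\sigma(L+t)m|]\geq (L+t)m$, so the expected per-round Phase~2 request cost is at least $R_{max}(Lm-t\delta m)$ for every $t\leq L/\delta$. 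Summing these contributions over the Phase~2 rounds then yields an expected online cost of $\Omega(R_{max}mL^2/\delta)$.

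Dividing the two bounds gives a competitive ratio of $\Omega\!\left(\frac{R_{max}L/\delta}{R_{min}L+D/\delta}\right)$, and choosing $L$ sufficiently larger than $D/(R_{min}\delta)$ collapses this to $\Omega\!\left(\frac{1}{\delta}\cdot\frac{R_{max}}{R_{min}}\right)$, as required. The main obstacle I expect is the clean handling of the triangle-inequality step: the movement bound on $|Y_t-X_L|$ must be applied pointwise before expectation, and the $\sigma$-independence of $X_L$ has to be argued precisely from the fact that Phase~1 requests are identical under both realizations of $\sigma$ (this is exactly where Yao's reduction to deterministic algorithms is crucial). A secondary subtlety is the case distinction that confirms no alternative offline strategy (early arrival, late arrival, overshoot, or partial chase) beats the proposed one by more than a constant factor.
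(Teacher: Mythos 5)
Your construction and analysis coincide with the paper's: an initial phase of $R_{min}$ requests at the origin while the adversary drifts away in a uniformly random direction, followed by $\Theta(L/\delta)$ rounds of $R_{max}$ requests riding along with the adversary (whose gap to the online server closes at rate only $\delta m$), with $L$ chosen large enough that the adversary's movement cost $\mathcal{O}(LDm/\delta)$ is dominated by its Phase-1 request cost $\mathcal{O}(R_{min}mL^2)$. The only immaterial differences are that you compute the expected request cost directly via $|a-b|+|a+b|\geq 2|b|$ where the paper conditions on the probability-$\frac{1}{2}$ event that the gap is at least $Lm$, that the paper additionally repeats the two phases cyclically, and that your planned check of alternative offline strategies is unnecessary since exhibiting any one offline strategy already upper-bounds $C_{Opt}$.
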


\begin{proof}
We use the same sequence as in the previous theorem to separate the servers of the adversary and the online algorithm:

For $x$ time steps, there are $R_{min}$ requests in every step on the starting position of the server.
The adversary moves its server a distance $m$ for $x$ steps to the left or right with probability $\frac{1}{2}$ each,
such that the distance between the two servers is at least $xm$ with probability at least $\frac{1}{2}$ after these steps.
The costs for the adversary are at most $Dxm+R_{min}mx^2$.

The adversary now issues $R_{max}$ requests at the position of its server and moves it by a distance $m$ in the same direction as in the previous steps.
This is done for exactly as many time steps as it would take the server of the online algorithm to "catch up" with the server of the adversary when it
it is at least a distance $xm$ away from the adversary's server
and moves towards it with the maximum distance in each round.
The necessary number of steps for that are $\frac{x}{\delta}$, since the distance between the two servers decreases by at most $\delta m$ in every round.

The costs the online algorithm has to pay, in case the distance between the two servers is at least $xm$ at the beginning of this phase,
for serving the requests are minimized when the algorithm moves the server with a maximum distance towards
the position of the adversary's server in every time step.
The costs for the online algorithm are therefore at least
$$\begin{array}{rcl}
  R_{max}\cdot\sum_{i=1}^{\frac{x}{\delta}}\left(xm-i\delta m\right) &=& R_{max}\left(\frac{mx^2}{2\delta}-\frac{mx}{2}\right) \\
    &\geq& \frac{1}{4}\frac{R_{max}mx^2}{\delta}
\end{array}$$
with probability at least $\frac{1}{2}$ by choosing $x\geq2\delta$.
The adversary pays $\frac{x}{\delta}Dm$ in this phase.

By choosing $x$ sufficiently large, the total costs of the adversary sum up to at most $3R_{min}mx^2$ over both described phases.

The adversary can now repeat the two phases in a circular way arbitrarily often. The costs can be analyzed as above since the one probabilistic choice
the adversary does is made independently of the behavior of the online algorithm and its own former behavior.
The resulting expected competitive ratio is $\Omega(\frac{1}{\delta}\cdot\frac{R_{max}}{R_{min}})$.
\end{proof}

We observe that as a special case, when $R_{max}=R_{min}$ the lower bound of the competitive ratio becomes independent of the number of
requests in each round.
In the following section we show that this is indeed possible to achieve in this scenario.

We finally address our decision to allow the algorithms to move the page before answering the requests.
Consider the scenario in which an algorithm has to answer the requests before moving the server.
It can be shown that the competitive ratio of such algorithms depend on the number of requests in each time step even if it is fixed throughout the whole sequence.

\begin{theorem}
If in every time step, requests must be answered before moving the server, every randomized algorithm has an expected competitive ratio of $\Omega(\nicefrac{r}{D})$ against an oblivious adversary
if the number of requests in each time step is fixed to a constant $r$.
\end{theorem}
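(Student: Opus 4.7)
The plan is to invoke Yao's Min-Max principle with a randomized input distribution that exploits the defining feature of the answer-first model: the server cannot move into position before paying for the current round's requests. I partition the $T$ rounds into $T/2$ two-round phases (a one-dimensional construction suffices, so the bound holds in every Euclidean dimension). In each odd round $2k-1$, all $r$ requests are placed at the starting position of the server, taken to be the origin. In each even round $2k$, an independent uniform sign $s_k\in\{-1,+1\}$ is drawn and all $r$ requests are placed at the point $s_k\cdot m$. Since the entire sign sequence is fixed before the algorithm runs, this is an oblivious adversary.

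Offline bound. Knowing $(s_k)_k$ in advance, OPT alternates between $P^{\mathrm{OPT}}_{2k-1}=0$ and $P^{\mathrm{OPT}}_{2k}=s_k\cdot m$; consecutive positions are exactly distance $m$ apart, so the schedule fits inside the unaugmented movement budget. All answer costs vanish and OPT pays only $Dm$ per round, giving $\mathrm{OPT}=DmT$.

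Online bound. Fix an arbitrary deterministic online algorithm. Because payment precedes movement in this variant, its position $P_{2k}$ at the start of round $2k$ is chosen using only requests from rounds $1$ through $2k-1$; these encode $s_1,\dots,s_{k-1}$ but are independent of $s_k$. Therefore
\[
E_{s_k}\!\bigl[\,r\cdot|P_{2k}-s_k m|\,\bigr] \;=\; \tfrac{r}{2}\bigl(|P_{2k}-m|+|P_{2k}+m|\bigr) \;\geq\; rm
\]
by the triangle inequality $|a-m|+|a+m|\geq 2m$. Summing over the $T/2$ even rounds yields expected online cost at least $rmT/2$, so the expected ratio is at least $r/(2D)=\Omega(r/D)$, and Yao's principle concludes the claim for randomized algorithms.

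The main subtlety is ensuring that the phase structure isolates information across phases — this is why every odd round re-centers the requests at the origin, making the bound on each even round independent of the algorithm's earlier choices. The augmented movement budget $(1+\delta)m$ plays no role in the argument, and this is precisely the point of the theorem: the construction penalizes the structural inability to spend current-round movement on reducing current-round answer cost, which is insensitive to how quickly the algorithm can move.
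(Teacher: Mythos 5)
Your proof is correct and follows essentially the same construction as the paper's: a repeated two-round gadget in which requests first appear at a known point and then at a uniformly random point at distance $m$, so that the answer-first algorithm must pay $rm$ in expectation per phase while the offline server, moving in anticipation, pays only $O(Dm)$. Your version is in fact slightly more explicit than the paper's about how the phases are made independent (re-centering at the origin each odd round) and about the expectation bound $\tfrac{1}{2}(|P-m|+|P+m|)\geq m$, but the underlying idea is identical.
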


\begin{proof}
Consider the following two time steps:
In the first step, $r$ requests are issued at the common position of the servers.
The adversary can now move the server to a position such that with probability at least $\frac{1}{2}$, the distance between the two servers is at least $m$.
This can be done by throwing a fair coin and the moving to the left or right as in the previous theorems.

In the second step, $r$ requests are issued at the position of the adversary's page.
The two steps may be repeated in a cyclic manner since the random choice of the adversary does not depend on any former time steps.

The costs for the online algorithm for one repetition of these two steps are at least $rm$ with probability at least $\frac{1}{2}$ while the costs of the adversary are at most $Dm$.
\end{proof}

%%%%%%%%%%%%%%%%%%%%%%%%%%%%%%%%%%%%%%%%%%%%%%%%%%%%%%%%%%%%%%%%%%%%%%%%%%%%%%%%%%
%%%%%%%%%%%%%%%%%%%%%%%%%%%%%%%%%%%%%%%%%%%%%%%%%%%%%%%%%%%%%%%%%%%%%%%%%%%%%%%%%%
\section{The Move-to-Center Algorithm}
In this section we provide a simple algorithm which achieves an optimal competitive
ratio on line segments and a near optimal competitive ratio for the Euclidean plane.
\smallskip

The algorithm \textit{Move-to-Center (MtC)} works as follows:

Assume the algorithm has its server located at a point $P_{Alg}$ and receives requests
$v_1,\ldots,v_r$. Let $c$ be the point minimizing $\sum_{i=1}^{r}d(c,v_i)$.
If $c$ is not unique, pick the one minimizing $d(P_{Alg},c)$.
MtC moves the server towards $c$ for a distance of $\min\{1,\frac{r}{D}\}\cdot d(P_{Alg},c)$
if this distance is less than $(1+\delta)m$.
Otherwise it moves the server a distance of $(1+\delta)m$ towards $c$.
\smallskip

The goal of the following analysis is to proof the following theorem:
\begin{theorem}
\label{ThMain}
Let $R_{min}$ and $R_{max}$ be the minimum and maximum number of requests per time step.
Algorithm \textit{MtC} is $\mathcal{O}(\frac{1}{\delta}\cdot\frac{R_{max}}{R_{min}})$-competitive on the infinite line and
$\mathcal{O}(\frac{1}{\delta^{\nicefrac{3}{2}}}\cdot\frac{R_{max}}{R_{min}})$-competitive in the Euclidean plane
using an augmented maximum moving distance of $(1+\delta)m$ for some $\delta\in(0,1]$.
\end{theorem}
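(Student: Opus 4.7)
The plan is an amortized potential-function analysis. Let $P_t$ and $O_t$ denote the positions of MtC and the offline optimum after step $t$, and $d_t = d(P_t, O_t)$. I will work with a potential
\begin{equation*}
\Phi_t = \kappa \cdot D \cdot g(\delta) \cdot d_t,
\end{equation*}
where $g(\delta) = 1/\delta$ on the line, $g(\delta) = 1/\delta^{3/2}$ in the plane, and $\kappa$ is an absolute constant. The goal is to show, step by step,
\begin{equation*}
C_{\text{MtC}}(t) + \Phi_t - \Phi_{t-1} \leq O\!\left(g(\delta) \cdot \tfrac{R_{max}}{R_{min}}\right) \cdot C_{\text{OPT}}(t),
\end{equation*}
which telescopes to the competitive ratio since $\Phi_0 = 0$ and $\Phi_T \geq 0$.

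I would split each round into two subphases. In subphase (i) the adversary moves by at most $m$; this increases $\Phi$ by at most $\kappa D g(\delta) m$, which is exactly $g(\delta)$ times the offline movement cost and accounts cleanly for the $g(\delta)$ factor. In subphase (ii) the requests $v_1,\ldots,v_r$ arrive with $1$-median $c$, MtC moves, and both sides answer. I bound MtC's answering cost by the triangle inequality as $\sum_i d(P_t, v_i) \leq \sum_i d(c, v_i) + r \cdot d(P_t, c)$. Since $\sum_i d(O_t, v_i) \geq \sum_i d(c, v_i)$ by optimality of the median, the first summand is absorbed by OPT's answering cost. It then remains to pay for MtC's movement cost plus the slack $r \cdot d(P_t, c)$ out of the drop of $\Phi$.

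Next I case-split on the movement rule. In the uncapped case MtC moves $\min\{1, r/D\} \cdot d(P_{t-1}, c)$ toward $c$, so movement plus slack is $O(\min\{r, D\} \cdot d(P_{t-1}, c))$; on the line, moving toward $c$ either directly shrinks $d_t$ (if $O$ is on the $c$-side of $P_{t-1}$) or leaves $O$ on the opposite side of $c$, and in the latter sub-case the median inequality already yields $\sum_i d(O_t, v_i) \geq \sum_i d(c, v_i) + \Omega(r \cdot d_{t-1})$, giving OPT enough surplus to cover the slack. In the capped case MtC uses its full $(1+\delta)m$ step while OPT moves by at most $m$, so $d_t$ shrinks by at least $\delta m$ per round; scaled by the $g(\delta)$ factor in $\Phi$ this drop pays for the potentially large slack $r \cdot d_t$, and pinpoints the necessity of $g(\delta) \geq \Omega(1/\delta)$.

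The main obstacle is the plane, where MtC moves toward $c$ but $d_t$ only shrinks by $\cos\theta$ times the step length, for $\theta$ the angle at $P_{t-1}$ between the rays to $c$ and to $O$. For small $\theta$ the argument reduces to the line case. For large $\theta$, a geometric argument shows that at least a constant fraction of the requests lie far from the segment $P_{t-1} O$, which forces $\sum_i d(O_t, v_i)$ to exceed $\sum_i d(c, v_i)$ by $\Omega(\sin\theta \cdot r \cdot d(O_t, c))$, giving OPT additional answering cost that absorbs MtC's slack. Balancing these two regimes at a threshold angle $\theta^\star = \Theta(\sqrt{\delta})$ produces the extra $1/\sqrt{\delta}$ factor that upgrades the $1/\delta$ bound on the line to $1/\delta^{3/2}$ in the plane. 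Finally, the ratio $R_{max}/R_{min}$ appears only when replacing $r$ by its lower bound in progress terms and by its upper bound in cost terms; this single slackness step accounts for the request-count ratio in the final bound.
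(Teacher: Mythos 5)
There is a genuine gap, and it lies in the choice of potential. Your $\Phi_t = \kappa\, D\, g(\delta)\, d_t$ is \emph{linear} in the distance $d_t$ between the two servers, so in the capped regime its per-step drop is at most $\kappa\, D\, g(\delta)\,(1+\delta)m$ --- a quantity independent of $d_t$. But this is exactly the regime where the drop must pay for the answering slack $r\cdot d(P_t,c)$, which grows without bound in $d_t$. Concretely: place the requests at OPT's position and let OPT stand still, with the servers at distance $d_{t-1} = km$ for large $k$. Then $C_{\mathrm{OPT}}(t)=0$, MtC pays $\Theta(r\,km)$ to answer, and your potential releases only $\kappa D g(\delta)(1+\delta)m$, so the per-step inequality $C_{\mathrm{MtC}}(t)+\Delta\Phi \le O(g(\delta))\,C_{\mathrm{OPT}}(t)$ fails for $k$ large. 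For the drop at distance $d$ under a $\delta m$ gain to be $\Omega(r d)$, the potential must satisfy $\Phi'(d)=\Omega(rd/(\delta m))$, i.e.\ it must be quadratic in $d$. This is precisely what the paper does: it uses a piecewise potential equal to $\Theta\!\left(\frac{r}{\delta m}\right)d(P_{Opt},P_{Alg})^2$ when the separation exceeds $\delta\frac{Dm}{4r}$ (and linear, $\Theta(D)\,d(P_{Opt},P_{Alg})$, only below that threshold, where $r\cdot d$ is itself $O(Dm)$). The quadratic form is also what makes the potential \emph{increase} chargeable: $\frac{r}{\delta m}(q^2-p^2)=\frac{r}{\delta m}(q+p)(q-p)=O(\frac{r}{\delta}q)$, which scales with OPT's answering cost rather than being a flat $g(\delta)Dm$. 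Your analysis of the uncapped case and of the two-phase decomposition is fine, but without replacing the linear potential by the quadratic one the capped case --- the heart of the theorem and the reason resource augmentation is needed at all --- does not go through.

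The remaining ingredients of your sketch are essentially aligned with the paper's: the reduction to requests concentrated at the $1$-median $c$ (the paper proves a lemma losing a factor $4\alpha+1$), the observation that on the line a full-speed move toward $c$ forces $a_2\le s_2$, and the threshold-angle argument at $\theta^\star=\Theta(\sqrt{\delta})$ in the plane, which corresponds to the paper's Lemma~\ref{leGeo} (if $s_2\le\frac{\sqrt{\delta}}{1+\delta/2}a_2$ then $h-q\ge\frac{1+\delta/2}{1+\delta}a_1$) and is the source of the extra $1/\sqrt{\delta}$. Your treatment of $R_{max}/R_{min}$ also matches the paper's. But these do not compensate for the flawed potential; fixing the proof requires adopting the distance-squared potential (or an equivalent superlinear one) and redoing the case analysis around the threshold $d=\delta\frac{Dm}{4r}$.
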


We first analyze the algorithm for the case that some fixed number $r$ appear per time, i.e. $R_{max}=R_{min}=:r$.
We then describe how the result of this analysis implies the results for the
general case and for the Answer-First Variant in Section~\ref{sec:ext}.
Note that in our proof we do not optimize the constants and instead focus on readability.

We first show that it is sufficient to analyze a simplified version of the problem in which the $r$ requests occur exactly on the point $c$
picked by MtC in every round.

\begin{lemma}
If MtC is $\alpha$-competitive in an instance where all requests in a time step $t$ occur on one point $c_t$, MtC is $4\alpha+1$-competitive
in an instance where $c_t$ is the closest center of the requests in the respective time step.
\end{lemma}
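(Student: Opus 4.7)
The plan is to show that MtC's trajectory is identical on both the original and the ``simplified'' instance (since the algorithm only uses $c_t$, which is the same in both), and then use triangle inequality to bound the difference between the two cost functionals. A key auxiliary quantity is the \emph{dispersion}
\[
 S_t := \sum_{i=1}^{r} d(c_t, v_{t,i}),
\]
which measures how far the requests in round $t$ deviate from their center. Since $c_t$ minimizes $\sum_i d(\cdot, v_{t,i})$ by definition, for any point $p$ we have $S_t \le \sum_i d(p, v_{t,i})$; in particular, taking $p$ to be the optimal server position at time $t$, we obtain $\sum_t S_t \le C_{\textit{OPT}}^{\textit{orig}}$ (where superscripts distinguish costs computed on the original vs.\ simplified sequence).

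Next I would bound the two cost transfers. For the server position $P_{t+1}$ chosen by MtC, the triangle inequality gives $\sum_i d(P_{t+1}, v_{t,i}) \le r\cdot d(P_{t+1}, c_t) + S_t$, hence (since move costs are unchanged)
\[
 C_{\textit{MtC}}^{\textit{orig}} \;\le\; C_{\textit{MtC}}^{\textit{simp}} + \sum_t S_t \;\le\; C_{\textit{MtC}}^{\textit{simp}} + C_{\textit{OPT}}^{\textit{orig}}.
\]
Symmetrically, evaluating the original OPT's trajectory $Q_t$ on the simplified instance and using $r\cdot d(Q_{t+1}, c_t) \le \sum_i d(Q_{t+1}, v_{t,i}) + S_t$, one gets $C_{\textit{OPT}_{\textit{orig}}}^{\textit{simp}} \le C_{\textit{OPT}}^{\textit{orig}} + \sum_t S_t \le 2C_{\textit{OPT}}^{\textit{orig}}$. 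Since the actual optimum on the simplified instance is at most this value, $C_{\textit{OPT}}^{\textit{simp}} \le 2C_{\textit{OPT}}^{\textit{orig}}$.

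Combining these with the assumed $\alpha$-competitiveness on simplified instances yields
\[
 C_{\textit{MtC}}^{\textit{orig}} \;\le\; \alpha\cdot C_{\textit{OPT}}^{\textit{simp}} + C_{\textit{OPT}}^{\textit{orig}} \;\le\; (2\alpha + 1)\, C_{\textit{OPT}}^{\textit{orig}},
\]
which is even slightly stronger than the stated $4\alpha+1$ (the extra slack likely reflects a non-tight estimate the authors use for readability, e.g.\ not exploiting that $c_t$ is the exact minimizer of the 1-median functional).

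The only conceptual step that needs care is the dispersion bound $\sum_t S_t \le C_{\textit{OPT}}^{\textit{orig}}$; everything else is bookkeeping with the triangle inequality. No obstacle is expected, since MtC's behavior is defined entirely through $c_t$ and therefore the two executions can be coupled trivially; the main point is to keep track of which cost is bounded by which and to note that move costs transfer unchanged between the two instances.
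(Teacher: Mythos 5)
Your proposal is correct and takes essentially the same approach as the paper: couple MtC's trajectory on the two instances, transfer the per-step serving costs via the triangle inequality, and use the minimality of $c_t$ to charge the dispersion $\sum_t S_t$ to the optimal cost. You even obtain the slightly sharper constant $2\alpha+1$ because you compare $c_t$ directly against OPT's server position, whereas the paper detours through $c'$ (the center closest to OPT's server) and thereby loses an extra factor, ending at $4\alpha+1$.
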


\begin{proof}
Fix a time step and let $v_1,\ldots,v_r$ be the requests in the original instance where $c$ is the closest center to MtC and $c'$ the closest center to
the optimal solution $OPT$ when serving the requests.
Let $C_{Opt}$ and $C_{Alg}$ be the costs of the optimal solution and the MtC algorithm in the original instance and $C_{Opt}'$ and $C_{Alg}'$ be the respective costs
in the simplified instance for serving the requests.
Let $o$ be the position of the optimal server and $a$ be the position of the algorithm's server.
It holds
$C_{Opt}'=r\cdot d(o,c)\leq r\cdot d(o,c')+r\cdot d(c',c)\leq \sum_{i=1}^{r}d(o,v_i)+3\sum_{i=1}^{r}d(c',v_i)\leq 4C_{Opt}$.
For the algorithm, we have
$C_{Alg}=\sum_{i=1}^{r}d(a,v_i)\leq r\cdot d(a,c)+\sum_{i=1}^{r}d(c,v_i)\leq C_{Alg}'+C_{Opt}$.
Therefore
$\frac{C_{Alg}}{C_{Opt}}\leq 4\frac{C_{Alg}'}{C_{Opt}'}+1$.
\end{proof}

For the analysis of the algorithm, we use a potential function argument.
Therefore we fix an arbitrary time step in the input sequence
and introduce some notation for this step.

By $P_{Alg}$ and $P_{Alg}'$, we denote the position of the algorithm's server before and
after moving it. $P_{Opt}$ and $P_{Opt}'$ will be used for the optimal server positions
before and after moving respectively.

For better readability, we define the following abbreviations which we use for the distances and sometimes also for the lines as geometric objects:
$a_1:=d(P_{Alg},P_{Alg}')$, $a_2:=d(P_{Alg}',c)$, $s_1:=d(P_{Opt},P_{Opt'})$, $s_2:=d(P_{Opt}',c)$, $p:=d(P_{Opt},P_{Alg})$,
$h:=d(P_{Opt}',P_{Alg})$, $h':=d(P_{Opt},P_{Alg}')$ and $q:=d(P_{Opt}',P_{Alg}')$.
An illustration can be found in Figure~\ref{fig:geo1}.

\begin{figure}
\centering
\includegraphics[width=0.7\textwidth]{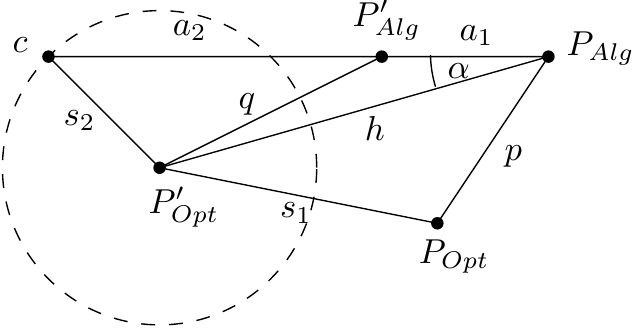}
\caption{Illustration of relevant points and distances for estimating the potential difference.
$h'=d(P_{Opt},P_{Alg}')$ is omitted for better overview.}
\label{fig:geo1}
\end{figure}

Using this notation, the costs of the online algorithm are
$$C_{Alg}=Da_1+ra_2$$
and the optimal costs are
$$C_{Opt}=Ds_1+rs_2.$$

We start with an estimation of the difference $h-q$ which is essential for utilizing the respective potential
functions in the upcoming parts of the analysis.

\begin{lemma}
\label{leGeo}
If $s_2\leq\frac{\sqrt\delta}{1+\frac{1}{2}\delta}a_2$, then $h-q\geq\frac{1+\frac{1}{2}\delta}{1+\delta}a_1$.
\end{lemma}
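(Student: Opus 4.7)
My plan is to reduce the claim to planar geometry by working in the plane containing $P_{Alg}$, $P_{Alg}'$, and $P_{Opt}'$ (the point $c$ sits on the line through $P_{Alg}$ and $P_{Alg}'$, so it is automatically in this plane). I would set up Cartesian coordinates with $P_{Alg}$ at the origin and $c$ on the positive $x$-axis, so that $P_{Alg}' = (a_1,0)$, $c = (a_1 + a_2, 0)$, and $P_{Opt}' = (x_0, y_0)$ with $(x_0 - (a_1+a_2))^2 + y_0^2 = s_2^2$.

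The key algebraic observation, which hinges entirely on the collinearity of $P_{Alg}$, $P_{Alg}'$, $c$, is that
\[
h^2 - q^2 = (x_0^2 + y_0^2) - ((x_0 - a_1)^2 + y_0^2) = a_1(2 x_0 - a_1),
\]
so $h^2 - q^2$ depends on $P_{Opt}'$ only through its projection $x_0$ onto the movement line. From $(x_0 - (a_1+a_2))^2 \leq s_2^2$ we get $x_0 \geq a_1 + a_2 - s_2$, hence $h^2 - q^2 \geq a_1(a_1 + 2 a_2 - 2 s_2)$. The hypothesis forces $s_2 \leq a_2$ (since $\sqrt{\delta}/(1+\delta/2) \leq 1$ for $\delta \in (0,1]$), so this lower bound is positive.

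To turn this into a bound on $h - q$ itself I need to control $h + q$. The naive triangle inequality $h + q \leq a_1 + 2a_2 + 2s_2$ is too lossy for the advertised constant, so instead I would minimize $h - q$ directly over the sphere of admissible positions of $P_{Opt}'$ about $c$. Parametrizing by $x_0$ with $y_0^2 = s_2^2 - (x_0 - (a_1+a_2))^2$ reduces this to a one-dimensional optimization. Differentiating and setting the derivative to zero yields a unique interior critical point satisfying $h/q = (a_1 + a_2)/a_2$; at the two endpoints one recovers $h - q = a_1$, so the critical point is in fact the minimum. Substituting the critical ratio back gives the sharp closed form
\[
h - q \;\geq\; a_1 \sqrt{1 - \frac{s_2^2}{a_2(a_1 + a_2)}}.
\]

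The conclusion then follows by plugging in the hypothesis. Using $a_2/(a_1+a_2) \leq 1$ and $s_2^2 \leq \delta a_2^2/(1+\delta/2)^2$, the square root simplifies via the identity $(1+\delta/2)^2 - \delta = 1 + \delta^2/4$, reducing the claim to the algebraic inequality $\sqrt{1+\delta^2/4}/(1+\delta/2) \geq (1+\delta/2)/(1+\delta)$, or equivalently a polynomial inequality in $\delta$ on $(0,1]$. I expect the main obstacle to lie precisely in this final step: the hypothesis and the advertised bound are nearly tight, so no slack can be spent in intermediate triangle inequalities, and the concluding polynomial inequality must be handled with some care (either by direct comparison of coefficients, by a one-parameter monotonicity argument in $\delta$, or by a Taylor-style expansion near $\delta = 0$ and checking the endpoint $\delta = 1$ separately).
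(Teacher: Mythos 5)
Your reduction to the plane, the identity $h^2-q^2=a_1(2x_0-a_1)$, and the exact minimization are all correct: the minimum of $h-q$ over the sphere of radius $s_2$ about $c$ is indeed $a_1\sqrt{1-s_2^2/(a_2(a_1+a_2))}$, attained at the interior critical point with $h/q=(a_1+a_2)/a_2$ (admissible because the hypothesis forces $s_2\le a_2$). The gap sits exactly where you predicted it, and it is fatal rather than a matter of care: after discarding the factor $a_2/(a_1+a_2)\le 1$ you need
\[
\frac{\sqrt{1+\delta^2/4}}{1+\delta/2}\;\ge\;\frac{1+\delta/2}{1+\delta},
\qquad\text{i.e.}\qquad
(1+\delta)^2\Bigl(1+\tfrac{\delta^2}{4}\Bigr)\;\ge\;\Bigl(1+\tfrac{\delta}{2}\Bigr)^4,
\]
and the difference of the two sides is $\delta^2\bigl(\tfrac{3}{16}\delta^2-\tfrac14\bigr)<0$ for every $\delta\in(0,1]$. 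So the polynomial inequality you defer to the last step is false throughout the range, not merely delicate. Nor can you rescue it by retaining the factor $a_2/(a_1+a_2)$: your intermediate bound is the \emph{exact} minimum, and as $a_1/a_2\to 0$ it tends to $a_1\sqrt{1-\delta/(1+\delta/2)^2}=a_1\sqrt{1+\delta^2/4}/(1+\delta/2)$, which is strictly below the target $\tfrac{1+\delta/2}{1+\delta}a_1$.

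What your sharp formula actually reveals is that the lemma as stated fails in the regime $a_1\ll a_2$: for instance $\delta=1$, $a_1=0.01$, $a_2=1$, $s_2=2/3$ gives $\min(h-q)\approx 0.00748 < 0.0075 = \tfrac{1+\delta/2}{1+\delta}a_1$. The paper's proof takes a different route: it asserts that the extremal configuration is the one with a right angle at $c$ (so $h^2=s_2^2+(a_1+a_2)^2$ and $q^2=s_2^2+a_2^2$) and derives $h-q\ge a_1/\sqrt{1+(s_2/a_2)^2}$, which does clear the target under the hypothesis --- but your critical-point computation shows that configuration is not the minimizer, so the paper's worst-case identification is itself incorrect. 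The shortfall is a fraction of a percent of $a_1$, so the downstream $\mathcal{O}(\cdot)$ bounds survive with slightly adjusted constants (e.g.\ a marginally stronger hypothesis on $s_2/a_2$), but as a proof of the stated inequality your proposal does not go through, and honestly finishing your own final step would have exposed why.
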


\begin{proof}
We want to get a lower bound for $h-q$ given fixed values for $h,s_2$ and $a_1$.
$q$ is maximized by choosing the angle $\alpha$ between $a_1$ and $h$ as large as possible.
This can be done by setting the angle between $s_2$ and $a_2$ to 90 degrees as shown in Figure~\ref{fig:geo2}.

\begin{figure}
\centering
\includegraphics[width=0.7\textwidth]{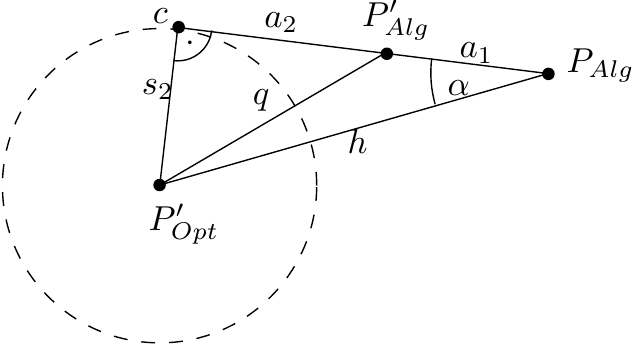}
\caption{Setting where $q$ is maximized given $h,s_2$ and $a_1$.}
\label{fig:geo2}
\end{figure}

Making use of the 90 degree angle we get $h^2=s_2^2+(a_1+a_2)^2$ and $q^2=s_2^2+a_2^2$. Let $s_2=\epsilon\cdot a_2$.
Then
$$\begin{array}{rcl}
  h-q &=& \sqrt{(\epsilon a_2)^2+(a_1+a_2)^2} - \sqrt{(\epsilon a_2)^2+a_2^2} \\
	  &=& \sqrt{(1+\epsilon^2)a_2^2+2a_1a_2+a_1^2} - \sqrt{1+\epsilon^2}a_2 \\
		&\geq& \sqrt{(1+\epsilon^2)a_2^2+2\sqrt{1+\epsilon^2}a_2\frac{a_1}{\sqrt{1+\epsilon^2}}+\frac{a_1^2}{1+\epsilon^2}} 
		   - \sqrt{1+\epsilon^2}a_2 \\
		&=&\sqrt{\left(\sqrt{1+\epsilon^2}a_2+\frac{a_1}{\sqrt{1+\epsilon^2}}\right)^2} - \sqrt{1+\epsilon^2}a_2 \\
		&=& \frac{a_1}{\sqrt{1+\epsilon^2}}.
\end{array}$$

It therefore holds
$$\epsilon\leq\sqrt{\frac{(1+\delta)^2}{(1+\frac{1}{2}\delta)^2}-1}\Rightarrow h-q\geq\frac{1+\frac{1}{2}\delta}{1+\delta}a_1.$$

Since
$$
  \sqrt{\frac{(1+\delta)^2}{(1+\frac{1}{2}\delta)^2}-1} = \frac{\sqrt{\delta+\frac{3}{4}\delta^2}}{1+\frac{1}{2}\delta}
	  \geq \frac{\sqrt{\delta}}{1+\frac{1}{2}\delta}
$$
this implies the lemma.
\end{proof}

\subsection{\texorpdfstring{Analysis for $r>D$}{Analysis for r>D}}
\label{sec:ana1}
In the following, we analyze the algorithm for the case $r>D$ for both the line segment and the Euclidean plane.
Most arguments apply to both cases, if there is a difference between the two it is stated explicitly in the respective case.

For server locations $P_{Alg}$ and $P_{Opt}$ of the online algorithm and the optimal solution respectively,
the potential $\phi$ is defined as
$$\phi(P_{Opt},P_{Alg}):=\left\{\begin{array}{l}
8\frac{r}{\delta m}\cdot d(P_{Opt},P_{Alg})^2,\\
  \tab\text{for }d(P_{Opt},P_{Alg})>\delta\frac{Dm}{4r}\\
2D\cdot d(P_{Opt},P_{Alg}),\\
\tab\text{otherwise}
\end{array}\right..$$

For each of the listed cases we bound the potential difference $\Delta\phi=\phi(P_{Opt}',P_{Alg}')-\phi(P_{Opt},P_{Alg})$ and the costs of the online algorithm
as expressions in $C_{Opt}$.

\textbf{1}. $p\leq\delta\frac{Dm}{4r}\leq\delta m$:
The algorithm has the possibility to move its server to the same position as the optimal solution (or closer to $c$).
Hence $a_2\leq s_2$.
If $q\leq\delta\frac{Dm}{4r}$ we have
$$\begin{array}{rcl}
  \Delta\phi &=& 2D(q-p) \\
	  &\leq& 2r(s_2+a_2)-2Dp \\
		&\leq& 4C_{Opt} - 2Dp
\end{array}$$
and
$$\begin{array}{rcl}
  C_{Alg} &=& Da_1+ra_2 \\
	  &\leq& D(p+s_1+q)+ra_2 \\
		&\leq& D(p+s_1)+r(s_2+2a_2) \\
		&\leq& 3C_{Opt} + Dp.
\end{array}$$

Otherwise it holds $q>\delta\frac{Dm}{4r}\geq p$ and we get
$$\begin{array}{rcl}
  \Delta\phi &=& 8\frac{r}{\delta m}\cdot q^2-2D\cdot p \\
    &\leq& 8\frac{r}{\delta m}(q^2-p^2) \\
		&=& 8\frac{r}{\delta m}(q+p)(q-p) \\
		&\leq& 8\frac{r}{\delta m}\cdot 2\cdot q\cdot (2+\delta)m \\
		&\leq& 48\frac{r}{\delta}(s_2+a_2) \\
		&\leq& \frac{96}{\delta}C_{Opt}
\end{array}$$
and
$$\begin{array}{rcl}
  C_{Alg} &=& Da_1+ra_2 \\
	  &\leq& D(p+s_1+q)+ra_2 \\
	  &\leq& D(s_1+2(s_2+a_2))+ra_2 \\
		&\leq& 5C_{Opt}.
\end{array}$$

\textbf{2}. $p>\delta\frac{Dm}{4r}$ and $q\leq\delta\frac{Dm}{4r}$:
If $p<\delta m$, we have $a_2\leq s_2$ as before. The potential difference can be estimated by
$$\begin{array}{rcl}
  \Delta\phi &=& 2Dq-8\frac{r}{\delta m}p^2 \\
	  &\leq& 2D(a_2+s_2)-2Dp \\
		&\leq& 4C_{Opt}-2Dp
\end{array}$$
and
$$\begin{array}{rcl}
  C_{Alg} &=& Da_1+ra_2 \\
	  &\leq& D(p+s_1+q)+ra_2 \\
		&\leq& C_{Opt} + 2Dp.
\end{array}$$

Else, we have
$$\begin{array}{rcl}
  \Delta\phi &=& 2Dq-8\frac{r}{\delta m}p^2 \\
	  &\leq& 2Dq-8rp \\
		&\leq& -6rp
\end{array}$$
and
$$\begin{array}{rcl}
  C_{Alg} &\leq& D(p+s_1+q)+r(q+s_2) \\
	  &\leq& C_{Opt} + 3rp.
\end{array}$$

\newpage

In all of the following cases, we have $p>\delta\frac{Dm}{4r}$ and $q>\delta\frac{Dm}{4r}$.

\textbf{3}. $q-h\leq-(1+\frac{\delta}{2})m$:
Since $s_1\leq m$, we have $q-p\leq-\frac{\delta}{2}m$.
This can be seen by
$$\begin{array}{rcl}
  (1+\frac{\delta}{2})m\leq h-q 
	  \leq p+s_1-q.
\end{array}$$
Therefore we get
$$\begin{array}{rcl}
  \Delta\phi &\leq& 8\frac{r}{\delta m}(q+p)(q-p) \\
	  &\leq& 8\frac{r}{\delta m}(q+p)(-\frac{\delta}{2}m) \\
		&\leq& -4r(q+p)
\end{array}$$
complemented by
$$\begin{array}{rcl}
  C_{Alg} &\leq& C_{Opt} + 2r(p+q).
\end{array}$$

\textbf{4}. $q-h>-(1+\frac{\delta}{2})m$ and $p\geq4m$:
We estimate $a_2$ in two different ways:
First assume our given space is the Euclidean plane.
Since $h-q<(1+\frac{\delta}{2})m \leq \frac{1+\frac{\delta}{2}}{1+\delta}a_1$ if $a_1=(1+\delta)m$,
we get $\frac{\sqrt{\delta}}{2}a_2\leq s_2$ according to Lemma~\ref{leGeo}.
If the given space is the infinite line, $a_1=(1+\delta)m$ directly implies $a_2\leq s_2$ since $P_{Opt}'$
cannot be located between $P_{Alg}'$ and $c$.
For both cases, if $a_1<(1+\delta)m$, then $a_2=0\leq s_2$.

Now, if $\frac{1}{2}p\leq q$ we may estimate
$$\begin{array}{rcl}
  \Delta\phi &\leq& 8\frac{r}{\delta m}(q+p)(q-p) \\
	  &\leq& 8\frac{r}{\delta m}(3q)(3m) \\
		&\leq& 72\frac{r}{\delta}(s_2+a_2) \\
		&\leq& \frac{216}{\delta^{\nicefrac{3}{2}}}C_{Opt}
\end{array}$$
and
$$\begin{array}{rcl}
  C_{Alg} &=& Da_1+ra_2 \\
	  &\leq& D(s_1+3(s_2+a_2))+ra_2 \\
		&\leq& \frac{10}{\sqrt{\delta}}C_{Opt}.
\end{array}$$
Both estimations can be reduced by a factor of $\frac{1}{\sqrt{\delta}}$ when dealing with the infinite line.\\
In case $\frac{1}{2}p> q$ we use
$$\begin{array}{rcl}
  \Delta\phi &\leq& 8\frac{r}{\delta m}(q+p)(q-p) \\
	  &\leq& 8\frac{r}{\delta m}(q+p)(-\frac{1}{2}p) \\
		&\leq& -24\frac{r}{\delta}p
\end{array}$$
and
$$\begin{array}{rcl}
  C_{Alg} &=& Da_1+ra_2 \\
		&\leq& \frac{5}{\sqrt{\delta}}C_{Opt}+Dp.
\end{array}$$

\textbf{5}. $q-h>-(1+\frac{\delta}{2})m$ and $p<4m$:
We get $\frac{\sqrt{\delta}}{2}a_2\leq s_2$ and $a_2\leq s_2$ for the Euclidean plane and the infinite line respectively
as before.
Since $q\leq s_1+p+a_1\leq p+3m$ we have
$$\begin{array}{rcl}
  \Delta\phi &\leq& 8\frac{r}{\delta m}(q+p)(q-p) \\
		&\leq& 8\frac{r}{\delta m}\cdot 11m\cdot q - 8\frac{r}{\delta m}(q+p)\delta\frac{Dm}{4r} \\
		&\leq& 88\frac{1}{\delta}(s_2+a_2)-2D(q+p) \\
		&\leq& \frac{264}{\delta^{\nicefrac{3}{2}}}C_{Opt}-2D(q+p)
\end{array}$$
and
$$\begin{array}{rcl}
  C_{Alg} &=& Da_1+ra_2 \\
		&\leq& \frac{5}{\sqrt{\delta}}C_{Opt}+Dp.
\end{array}$$
Again, both estimations can be reduced by a factor of $\frac{1}{\sqrt{\delta}}$ for the 1-dimensional space.

\noindent In all cases we have $C_{Alg}+\Delta\phi\leq\calO(\frac{1}{\delta^{\nicefrac{3}{2}}})\cdot C_{Opt}$ for the 2-dimensional
and $C_{Alg}+\Delta\phi\leq\calO(\frac{1}{\delta})\cdot C_{Opt}$ for the 1-dimensional Euclidean space.

%%%%%%%%%%%%%%%%%%%%%%%%%%%%%%%%%%%%%%%%%%%%%%%%%%%%%%%%%%%%%%%%%%%%%%%%%%%%%%%%%%
%%%%%%%%%%%%%%%%%%%%%%%%%%%%%%%%%%%%%%%%%%%%%%%%%%%%%%%%%%%%%%%%%%%%%%%%%%%%%%%%%%

\subsection{\texorpdfstring{Analysis for $r\leq D$}{Analysis for r<=D}}
\label{sec:ana2}

For $r\leq D$ we first give a detailed analysis for the Euclidean plane and then
briefly describe how to modify the necessary parts to work for the line segment such that the competitive
ratio becomes $\mathcal{O}(\frac{1}{\delta})$.

We increase the potential $\phi$ from before by a factor of 2, such that
$$\phi(P_{Opt},P_{Alg}):=\left\{\begin{array}{l}
16\frac{r}{\delta m}\cdot d(P_{Opt},P_{Alg})^2,\\
  \tab\text{for }d(P_{Opt},P_{Alg})>\delta\frac{Dm}{4r}\\
4D\cdot d(P_{Opt},P_{Alg}),\\
\tab\text{otherwise}
\end{array}\right..$$

\textbf{1}. Let $p\leq\delta\frac{Dm}{4r}$ and $q\leq\delta\frac{Dm}{4r}$.
First we consider the case $s_2\leq\frac{\sqrt{\delta}}{1+\frac{\delta}{2}}a_2$.
Using Lemma~\ref{leGeo}, we bound the potential difference by
$$\begin{array}{rcl}
  \Delta\phi &=& 4D(q-p) \\
	  &\leq& 4D(q-h+h-p) \\
		&\leq& -4D\frac{1+\frac{\delta}{2}}{1+\delta}a_1 + 4Ds_1.
\end{array}$$

If $a_1=\frac{r}{D}(a_1+a_2)$ then $\Delta\phi\leq -3r(a_1+a_2)+ 4Ds_1$.
We have $C_{Alg}\leq D\cdot\frac{r}{D}(a_1+a_2)+ra_2\leq 2r(a_1+a_2)$.

Otherwise, $a_1=(1+\delta)m$ which gives us $\Delta\phi\leq-4D(1+\frac{\delta}{2})m+ 4Ds_1$.
In this case $q\leq\delta\frac{Dm}{4r}$ can be used to get
$$\begin{array}{rcl}
  C_{Alg} &\leq& D(1+\delta)m + ra_2 \\
	  &\leq& D(1+\delta)m + r(\delta\frac{Dm}{4r}+s_2) \\
		&\leq& 2D(1+\delta)m + C_{Opt}.
\end{array}$$

The second case is $s_2>\frac{\sqrt{\delta}}{1+\frac{\delta}{2}}a_2$.
If $\frac{1}{2}p\leq q$ then
$$\begin{array}{rcl}
  \Delta\phi &\leq& 4D(s_1+p+a_1-p) \\
	  &\leq& 4Ds_1 + 4r(a_1+a_2) \\
		&\leq& 4Ds_1 + 4r(p+s_1+q+a_2) \\
		&\leq& 8Ds_1 + 4r(3s_2+4a_2) \\
		&\leq& \frac{44}{\sqrt{\delta}} C_{Opt}
\end{array}$$
and
$$\begin{array}{rcl}
  C_{Alg} &\leq& r(p+s_1+q) + 2ra_2 \\
	  &\leq& \frac{10}{\sqrt{\delta}}C_{Opt}.
\end{array}$$
Else $\frac{1}{2}p> q$ and we have
$$\begin{array}{rcl}
  \Delta\phi &\leq& -2Dp
\end{array}$$
and
$$\begin{array}{rcl}
  C_{Alg} &\leq& D(p+s_1+q) + ra_2 \\
	  &\leq& \frac{2}{\sqrt{\delta}}C_{Opt} + \frac{3}{2}Dp.
\end{array}$$

We use the same argumentation for $p>\delta\frac{Dm}{4r}$
since in this case $-16\frac{r}{\delta m}p^2\leq - 4Dp$.

\textbf{2}. $p>\delta\frac{Dm}{4r}$ and $q>\delta\frac{Dm}{4r}$:
As before we start with the case that $s_2\leq\frac{\sqrt{\delta}}{1+\frac{\delta}{2}}a_2$.
We have
$$\begin{array}{rcl}
  \Delta\phi &=& 16\frac{r}{\delta m}(q^2-p^2) \\
	  &\leq& 16\frac{r}{\delta m}(q+p)(q-p) \\
		&\leq& 16\frac{r}{\delta m}(q+p)(-\frac{1+\frac{\delta}{2}}{1+\delta}a_1+s_1).
\end{array}$$
If $a_1=(1+\delta)m$ then
$\Delta\phi\leq -8r(q+p)$ and
$C_{Alg}\leq2r(a_1+a_2)\leq \frac{4}{\sqrt{\delta}}C_{Opt}+2r(p+q)$.

Else, $a_1=\frac{r}{D}(a_1+a_2)<(1+\delta)m$
and using $a_1+a_2\leq2\frac{Dm}{r}$ we get
$$\begin{array}{rcl}
  \Delta\phi &\leq& 16\frac{r}{\delta m}(q+p)s_1 - 16\frac{r}{\delta m}(q+p)\frac{1+\frac{\delta}{2}}{1+\delta}a_1 \\
	  &\leq& 16\frac{r}{\delta m}s_1(s_1+a_1+2s_2+2a_2) - 16\frac{r}{\delta m}\cdot \delta\frac{Dm}{4r}\cdot\frac{3}{4}\frac{r}{D}(a_1+a_2) \\
		&\leq& 64\frac{D}{\delta}s_1+32\frac{r}{\delta}s_2 -3r(a_1+a_2) \\
		&\leq& \frac{64}{\delta}C_{Opt} - 3r(a_1+a_2)
\end{array}$$
For the online algorithm we have
$C_{Alg}\leq2r(a_1+a_2)$.

Now consider the case $s_2>\frac{\sqrt{\delta}}{1+\frac{\delta}{2}}a_2$.
If $\frac{1}{2}p\leq q$ we use it to get
$$\begin{array}{rcl}
  \Delta\phi &=& 16\frac{r}{\delta m}(q+p)(q-p) \\
		&\leq& 16\frac{r}{\delta m}(3q)(3m) \\
		&=& 432\frac{r}{\delta^{\nicefrac{3}{2}}}s_2
\end{array}$$
and
$$\begin{array}{rcl}
  C_{Alg} &\leq& 2r(a_1+a_2) \\
	  &\leq& \frac{5}{\sqrt{\delta}}C_{Opt}.
\end{array}$$

Otherwise we have
$$\begin{array}{rcl}
  \Delta\phi &=& 16\frac{r}{\delta m}(q+p)(q-p) \\
		&\leq& 16\frac{r}{\delta m}(q+p)(-q) \\
		&=& -4D(q+p)
\end{array}$$
and
$C_{Alg}\leq \frac{2}{\sqrt{\delta}}C_{Opt}+D(p+q)$.

Again, the arguments also apply to $p\leq\delta\frac{Dm}{4r}$ due to
$-4Dp\leq-16\frac{r}{\delta m}p^2$.

To get the bound for the line segment, in each of the two big cases replace the distinction whether $s_2\leq\frac{\sqrt{\delta}}{1+\frac{\delta}{2}}a_2$
by $s_2\leq a_2$.
Also the estimation of $q-h$ under the use of Lemma~\ref{leGeo} may be replaced by $q-h\leq -a_1$.

Again, in all cases we have $C_{Alg}+\Delta\phi\leq\calO(\frac{1}{\delta^{\nicefrac{3}{2}}})\cdot C_{Opt}$ for the 2-dimensional
and $C_{Alg}+\Delta\phi\leq\calO(\frac{1}{\delta})\cdot C_{Opt}$ for the 1-dimensional case.

%%%%%%%%%%%%%%%%%%%%%%%%%%%%%%%%%%%%%%%%%%%%%%%%%%%%%%%%%%%%%%%%%%%%%%%%%%%%%%%%%%
%%%%%%%%%%%%%%%%%%%%%%%%%%%%%%%%%%%%%%%%%%%%%%%%%%%%%%%%%%%%%%%%%%%%%%%%%%%%%%%%%%

\subsection{General Case and Answer-First Variant}
\label{sec:ext}
In this section we briefly describe how to modify the analysis such that the result
for general $R_{min}$ and $R_{max}$ follows.
Furthermore we show that the Move-To-Center Algorithm is also almost optimal
in the Answer-First Variant.

The general result can be derived as follows:
We replace the fixed number of requests $r$ in the potential function by the maximum number of
requests $R_{max}$.
In the cases where the potential is used to cancel out the costs of the algorithm this is then still
possible.
However if the potential difference is positive it may add a term which is $\mathcal{O}(\frac{R_{max}}{R_{min}})$ times the optimal costs.

\begin{theorem}
Let $r\geq D$ be the fixed number of requests per time step.
Algorithm \textit{MtC} is $\mathcal{O}(\frac{1}{\delta^{\nicefrac{3}{2}}}\cdot\frac{r}{D})$-competitive utilizing a maximum moving distance of $(1+\delta)m$
in the Answer-First Variant.
\end{theorem}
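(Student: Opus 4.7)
The plan is to reduce the Answer-First Variant to the standard Move-First setting handled by Theorem~\ref{ThMain}, paying only a factor $\calO(r/D)$ for the reduction. The key observation is that \textit{MtC} makes identical movement decisions in both variants; only the timing of the serving cost changes, so the comparison becomes entirely local to a single round.

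First I would bound the Answer-First per-round cost of \textit{MtC} by a multiple of its Move-First cost. With the notation $a_1 = d(P_{Alg}, P_{Alg}')$ and $a_2 = d(P_{Alg}', c)$ from Section~\ref{sec:ana1}, the algorithm moves along the segment from $P_{Alg}$ to $c$, so $d(P_{Alg}, c) = a_1 + a_2$. The per-round Answer-First cost of the algorithm is therefore $r(a_1 + a_2) + D a_1 = (r+D) a_1 + r a_2$, exceeding the per-round Move-First cost $D a_1 + r a_2$ by exactly $r a_1$. Using $r \geq D$, we get $r a_1 = (r/D) \cdot D a_1 \leq (r/D)(D a_1 + r a_2)$, which upon summing over all rounds yields
$$
C_{Alg}^{AF} \leq (1 + r/D) \cdot C_{Alg}^{MF} \leq \frac{2r}{D} \cdot C_{Alg}^{MF},
$$
where the superscripts distinguish the two cost definitions of the algorithm on the same input sequence.

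Second I would compare the two offline optima by an index-shift argument: given an optimal Answer-First schedule with positions $Q_1, \ldots, Q_{T+1}$, define a Move-First schedule by $P_1 = Q_1$ and $P_{t+1} = Q_t$ for all $t \geq 1$. In each round the request set is served from $Q_t$ in both schedules (identical serving contributions), while the total Move-First movement cost equals the total Answer-First movement cost minus the final displacement $D \cdot d(Q_T, Q_{T+1})$. The constructed schedule is feasible because each displacement $d(Q_{t-1}, Q_t)$ already respected the bound $m$. Hence $C_{Opt}^{MF} \leq C_{Opt}^{AF}$.

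Combining these two steps with Theorem~\ref{ThMain} for the fixed-$r$ Move-First case gives
$$
C_{Alg}^{AF} \leq \frac{2r}{D} \cdot C_{Alg}^{MF} \leq \frac{2r}{D} \cdot \calO\!\left(\frac{1}{\delta^{\nicefrac{3}{2}}}\right) \cdot C_{Opt}^{MF} \leq \calO\!\left(\frac{r}{D\,\delta^{\nicefrac{3}{2}}}\right) \cdot C_{Opt}^{AF},
$$
which is the claimed ratio. The only really delicate point is the geometric identity $d(P_{Alg}, c) = a_1 + a_2$, which holds because \textit{MtC} always moves along the straight segment toward $c$; the rest is routine bookkeeping, and in particular no case analysis beyond what Theorem~\ref{ThMain} already provides is required.
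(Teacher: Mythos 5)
Your proof is correct and follows essentially the same route as the paper: both bound the extra per-round cost of the algorithm by $ra_1$ (yielding the factor $2\frac{r}{D}$ for $r\geq D$) and then argue that the Answer-First optimum dominates a feasible Move-First schedule, after which Theorem~\ref{ThMain} finishes the job. The only cosmetic difference is in the comparison of the optima: the paper realizes the one-round shift by prepending $r$ dummy requests at $P_0$ (at additive cost at most $rm$), whereas you construct the shifted schedule explicitly, which is slightly cleaner since it leaves the input sequence untouched.
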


\begin{proof}
We relate the costs of the algorithm in the Answer-First Variant to the costs for the same request sequence in the original model.
The optimal solution may be assumed to be the same for both model variants given the same request sequence, since we can insert $r$ dummy requests
on $P_0$ at the beginning of the sequence which allows the optimal solution to operate the same as in the Answer-First model and only has additional costs of at most $rm$.

In case $r\leq D$,
the additional term $ra_1$ in the costs of the algorithm 
can be estimated via $ra_1\leq Da_1$ which implies an increase of the algorithms cost by at most a factor 2.
For the case $r>D$,
the additional term in the costs of the algorithm is $ra_1$, which implies the costs of the algorithm increase by a factor of at most $2\frac{r}{D}$.
\end{proof}

%%%%%%%%%%%%%%%%%%%%%%%%%%%%%%%%%%%%%%%%%%%%%%%%%%%%%%%%%%%%%%%%%%%%%%%%%%%%%%%%%%
%%%%%%%%%%%%%%%%%%%%%%%%%%%%%%%%%%%%%%%%%%%%%%%%%%%%%%%%%%%%%%%%%%%%%%%%%%%%%%%%%%

\section{The Moving Client Variant}

We propose a variant in which the requests are posed by agents located in the network (or physically in the Euclidean plane) which also move at a limited speed.
As an example, consider a disaster scenario where the helpers form an ad-hoc network with their smartphones in order to coordinate their efforts.
In these scenarios, data is sometimes physically transported or a mobile signal station is used.
In a sense, the server in this case is supposed to follow the agents around in order to make best use of the resource.
We focus on only having one agent in the network, however our results can be modified to also work for multiple agents by similar arguments as in the original problem.

The complete model is as follows:
A server as well as an agent are initially located at some point $P_0$ in the Euclidean plane.
In step $t$, the server can be moved from $P_{t-1}$ to $P_{t}$ with $d(P_{t-1},P_t)\leq m_s$.
The agent starts at point $A_0=P_0$ and moves from $A_{t-1}$ to $A_{t}$ with $d(A_{t-1},A_t)\leq m_a$ in round $t$.
The costs of a sequence of length $T$ is $\sum_{t=1}^{T}\left(D\cdot d(P_{t-1},P_t)+d(P_t,A_t)\right)$.
In the online variant, the position $A_t$ gets revealed at the beginning of round $t$, i.e. the agent moves before the server.

We first show that the competitive ratio is unbounded if $m_a$ is larger than $m_s$.

\begin{theorem}
Let $\varepsilon>0$ and $m_a=(1+\varepsilon)m_s$. No randomized online algorithm can be better than $\Omega(\sqrt{T}\frac{\varepsilon}{1+\varepsilon})$-competitive
in the Moving Client Variant against an oblivious adversary.
\end{theorem}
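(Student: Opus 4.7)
The plan is to adapt the construction from Theorem~\ref{ThInfinite} by using the agent itself as the source of randomness that the online server cannot anticipate. I would invoke Yao's min-max principle and construct, for a parameter $x$ to be tuned at the end (of order $\sqrt{T}$), a random agent trajectory parameterised by a single fair coin flip $\sigma\in\{-1,+1\}$. The trajectory has three phases: in Phase~1 (length $x$) the agent sits at the origin so the Phase~1 input is identical for both values of $\sigma$; in Phase~2a (length $x/\varepsilon$) the agent moves at full speed $m_a$ in direction $\sigma$; in Phase~2b (the remaining $T-x-x/\varepsilon$ steps) the agent continues in direction $\sigma$ but only at speed $m_s$, calibrated so that $A_t=\sigma\,t\,m_s$ throughout Phase~2b. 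Feasibility $|A_t-A_{t-1}|\le m_a$ is immediate at every transition.

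To upper bound the offline cost I would exhibit the explicit strategy $P^{\mathrm{OPT}}_t=\sigma\,t\,m_s$: the offline server moves at full speed in the correct direction from the very first step. By construction OPT is co-located with the agent throughout Phase~2b (zero request cost); in Phase~1 the gap grows linearly from $0$ to $xm_s$, contributing $\calO(x^2 m_s)$; in Phase~2a the gap $|x\,m_s - s\varepsilon m_s|$ decays linearly back to $0$, contributing $\calO(x^2 m_s/\varepsilon)$. Adding the movement cost yields $C_{\mathrm{OPT}}=\calO(TDm_s+x^2 m_s/\varepsilon)$.

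The heart of the argument is the online lower bound. The online's position $P^{\mathrm{on}}_x$ is a function of the Phase~1 input only and is therefore independent of $\sigma$, so with probability at least $\tfrac12$ the direction $\sigma$ points opposite to $P^{\mathrm{on}}_x$. I would then track the signed lag $L_s=P^{\mathrm{on}}_{x+s}-A_{x+s}$: in Phase~2a the agent pulls away by at least $m_a-m_s=\varepsilon m_s$ per step regardless of the online's action, so $|L_s|\ge xm_s$ by the end of Phase~2a; in Phase~2b the agent and the server share the same speed bound $m_s$, so the lag \emph{cannot} shrink and $|L_s|\ge xm_s$ persists for the remaining $\Theta(T)$ steps, forcing expected online request cost $\Omega(xTm_s)$. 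Balancing this against $C_{\mathrm{OPT}}$ by choosing $x=\Theta(\sqrt{TD\varepsilon/(1+\varepsilon)})$ delivers the announced competitive ratio $\Omega(\sqrt{T}\cdot\varepsilon/(1+\varepsilon))$. The main obstacle is making the lag argument robust to arbitrary Phase~1 pre-positioning by the online, which is exactly what the Phase~1 stationarity of the agent buys: any deterministic $P^{\mathrm{on}}_x$ lies on the wrong side of the random direction with probability $1/2$, and once the agent reaches its Phase~2b cruising speed of $m_s$ the online simply has no way to close the accumulated gap.
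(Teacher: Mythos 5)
Your construction is correct and follows the same strategy as the paper's proof: a fair coin fixes a direction, the agent idles long enough that the online position at the reveal is independent of the coin, the speed surplus $m_a-m_s=\varepsilon m_s$ then opens a gap of order $xm_s$ that cannot be closed once both parties cruise at $m_s$, and balancing the offline setup cost $\Theta(x^2m_s/\varepsilon)$ against the persistent $\Omega(Txm_s)$ online cost gives the bound. The only differences from the paper are cosmetic (when exactly the agent accelerates and how $x$ is normalized), so this is essentially the paper's argument.
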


\begin{proof}
The construction is similar to the bound in Theorem~\ref{ThInfinite}.
We divide the input sequence into two major phases.
In the first phase, for some value $x$ chosen later, the adversary moves the server $x\cdot\frac{m_a}{m_s}$ rounds by a distance of $m_s$
in one of two opposite directions determined by a fair coin.
The requesting agent stays at $P_0$ and only moves to the position of the adversary for the last $x$ time steps.
The costs for the adversary in this step are at most $Dxm_a+x^2\frac{m_a^2}{m_s}$.
By the end of this phase, with probability $\frac{1}{2}$ the online algorithm has a distance of at least $x(m_a-m_s)=x\varepsilon  m_s$ to the agent and the adversary,
since it can only move $x$ steps towards the agent after the outcome of the random experiment is revealed.

In the second phase the adversary and the agent continue to move in the same direction by a distance of $m_s$ in each round.
Hence the costs of the online algorithm are at least $(T-x\frac{m_a}{m_s})(x\varepsilon  m_s)$ while
the costs of the adversary are at most $D(T-x\frac{m_a}{m_s})m_s$.
By setting $x=\sqrt{T}\cdot\frac{m_s}{m_a}$ we get a competitive ratio of $\Omega(\sqrt{T}\frac{\varepsilon}{1+\varepsilon})$.
\end{proof}

From Theorem~\ref{ThMain} it follows directly that by using an augmented maximum moving distance of $(1+\delta)m_s$ for the online algorithm,
we can achieve a competitive ratio independent of $T$.

\begin{corollary}
MtC is
$\calO(\frac{1}{\delta^{\nicefrac{3}{2}}})$-competitive in the Moving Client Variant using
an augmented maximum moving distance of $(1+\delta)m_s$ for some $\delta\in(0,1]$.
\end{corollary}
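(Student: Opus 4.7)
The plan is to view the Moving Client Variant as a literal instantiation of the general Mobile Server Problem analysed in Theorem~\ref{ThMain}. Concretely, I would identify the single request point $v_{t,1}$ of the original model with the agent's position $A_t$, and set $m = m_s$. Then in both models the server has movement budget $(1+\delta)m_s$ per round, the movement is decided after seeing $A_t$ (the corollary relies on the statement that the agent moves before the server), and the per-round cost reads $D\cdot d(P_{t-1},P_t)+d(P_t,A_t)$, which matches the expression in the Mobile Server cost exactly after reindexing the time step by one.

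Next I would observe that the only feature distinguishing the Moving Client Variant from the setting of Theorem~\ref{ThMain} is the additional constraint $d(A_{t-1},A_t)\leq m_a$. This is a restriction on the admissible input sequences, not on the algorithm; it can only shrink the set of inputs over which the worst case is taken. Since Theorem~\ref{ThMain} bounds the competitive ratio on \emph{every} input sequence, the bound remains valid on this restricted subclass. The same offline optimum is available to the adversary in both formulations (any Moving-Client sequence is a legal Mobile-Server sequence with $r_t=1$), so no separate argument is needed to relate $C_{Opt}$ in the two problems.

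Finally, I would apply Theorem~\ref{ThMain} with $R_{min}=R_{max}=1$, which yields a competitive ratio of $\calO(\frac{1}{\delta^{\nicefrac{3}{2}}})$ in the Euclidean plane, and note that the algorithm \textit{MtC} itself is well defined in this reduction: with a single request per round, the center $c$ minimising $\sum_i d(c,v_i)$ collapses to the request point $A_t$, so the movement rule of \textit{MtC} becomes ``move towards $A_t$ by $\min\{1,\tfrac{1}{D}\}\cdot d(P_{Alg},A_t)$, capped at $(1+\delta)m_s$'', which is exactly what the corollary requires.

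There is no real obstacle here beyond checking that the two cost functions line up step-by-step and that the order of agent/server moves in the Moving Client model agrees with the ``move-then-answer'' convention of the base model; once that bookkeeping is in place, the corollary is an immediate specialisation of Theorem~\ref{ThMain}.
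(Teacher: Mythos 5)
Your proposal is correct and matches the paper's reasoning: the paper likewise treats the corollary as an immediate specialisation of Theorem~\ref{ThMain}, noting only that the Moving Client Variant is the one-request-per-round case (so $R_{min}=R_{max}=1$) with an extra restriction on admissible inputs that can only help the online algorithm. Your elaboration of the cost-function alignment and the move-then-answer timing is exactly the bookkeeping the paper leaves implicit.
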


Until now, we only discussed the case $m_a>m_s$.
Note that in case $m_a\leq m_s$ standard solutions to the Page Migration Problem still do not apply, since they require
moving to a specific point after collecting a batch of requests.
This point may still lie outside the allowed moving distance $m_s$.
However we can show, that the MtC-Algorithm used to solve our original problem achieves a constant competitive ratio.
We stick to the case $m_s=m_a$ for easier readability; however it is easy to see that the result applies to the more general
case $m_s\geq m_a$ as well.

The algorithm in our case does the following:
Upon receiving the position of the agent at point $A_t$,
move the server a distance of $\min(m_s,\frac{1}{D}d(P_{t-1},A_t))$ towards $A_t$. 

\begin{theorem}
The MtC Algorithm is $\mathcal{O}(1)$-competitive in the Moving Client Variant for $m:=m_s=m_c$.
\end{theorem}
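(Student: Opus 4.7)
The plan is to show that, because $P_0=A_0$ and $m_a\leq m_s$, the MtC algorithm never actually invokes its $m_s$-cap; this confines the one-round analysis to the ``small-$a$'' regime of Section~\ref{sec:ana1} and yields a constant competitive ratio via a standard potential argument.

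My first step would be to prove by induction that $d(P_t,A_t)\leq (D-1)m$ for every $t\geq 0$. The base case holds since $P_0=A_0$. For the inductive step, the agent's bounded speed gives $d(P_{t-1},A_t)\leq d(P_{t-1},A_{t-1})+m\leq Dm$, so MtC moves by exactly $\tfrac{1}{D}d(P_{t-1},A_t)\leq m$, i.e., $P_t=(1-\tfrac{1}{D})P_{t-1}+\tfrac{1}{D}A_t$. Hence $d(P_t,A_t)=(1-\tfrac{1}{D})d(P_{t-1},A_t)\leq (D-1)m$, closing the induction.

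With this invariant in hand, I would use the potential $\phi(P_{Opt},P_{Alg})=2D\cdot d(P_{Opt},P_{Alg})$, which is the linear (small-distance) branch of the potential used in Section~\ref{sec:ana2}, now without any $\delta$ factor. Convexity of the Euclidean norm, applied to the convex-combination identity for $P_t$, gives in the notation $p,q,s_1,s_2$ of Section~\ref{sec:ana1} the bound $q\leq (1-\tfrac{1}{D})(p+s_1)+\tfrac{1}{D}s_2$, so $\Delta\phi\leq -2p+2(D-1)s_1+2s_2$. The algorithm's per-round cost is $Da_1+d(P_t,A_t)=\tfrac{2D-1}{D}d(P_{t-1},A_t)\leq 2(p+s_1+s_2)$. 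Adding the two gives $C_{Alg}+\Delta\phi\leq 2Ds_1+4s_2\leq 4\,C_{Opt}$ per round.

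Summing over all rounds, and using $\phi_0=0$ (the algorithm's and the offline server both start at $P_0$) together with $\phi_T\geq 0$, yields $C_{Alg}\leq 4\,C_{Opt}$. The one nontrivial ingredient is the invariant $d(P_t,A_t)\leq (D-1)m$: recognising that it keeps the distance $d(P_{t-1},A_t)$ below $Dm$ throughout the execution is what eliminates the dependence on $\delta$, thereby explaining why the Moving Client Variant with $m_a\leq m_s$ admits a constant competitive ratio \emph{without} resource augmentation, in sharp contrast to the original Mobile Server Problem.
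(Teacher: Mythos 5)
Your proof is correct, and it takes a genuinely different route from the paper's. The paper proves this theorem by specializing its general potential-function machinery: it uses the potential $2^{\nicefrac{3}{2}}D\cdot d(P_{Opt},P_{Alg})$, splits into the cases $s_2\leq a_2$ and $a_2<s_2$ (with sub-cases on whether the movement cap binds), and invokes a simplified form of Lemma~\ref{leGeo} (with $\epsilon=1$) to obtain $h-q\geq\frac{1}{\sqrt{2}}a_1$. You instead first establish the invariant $d(P_t,A_t)\leq(D-1)m$, which shows the cap $m_s$ is \emph{never} reached, so that $P_t$ is always the exact convex combination $(1-\frac{1}{D})P_{t-1}+\frac{1}{D}A_t$; convexity of the norm then gives the single uniform bound $q\leq(1-\frac{1}{D})(p+s_1)+\frac{1}{D}s_2$, and with the potential $2D\cdot d(P_{Opt},P_{Alg})$ the whole round is handled in one case with the explicit constant $4$ (versus the paper's $36$). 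Your argument is shorter, avoids the geometric lemma entirely, and makes transparent \emph{why} resource augmentation is unnecessary here: with $m_a\leq m_s$ and $A_0=P_0$ the speed constraint is simply never active. The paper's approach, by contrast, reuses the exact tools already developed for Theorem~\ref{ThMain} and degrades more gracefully if the initial condition $P_0=A_0$ were relaxed, since Lemma~\ref{leGeo} does not depend on the cap being slack — though note that the paper's own treatment of the case $a_1=m$ also quietly relies on the invariant that the algorithm stays within distance $Dm$ of the request, so this robustness advantage is smaller than it first appears. The only blemish in your write-up is cosmetic: the relevant prior analysis for $r\leq D$ is Section~\ref{sec:ana2}, not the ``small-$a$'' regime of Section~\ref{sec:ana1}, but nothing in your argument actually depends on that reference.
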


\begin{proof}
We define the potential $\phi(P_{Alg},P_{Opt}):=2^{\frac{3}{2}}D\cdot d(P_{Alg},P_{Opt})$.
We use the same notation as in the proof of Theorem~\ref{ThMain}, shown in Figure~\ref{fig:geo1}.
Hence
$$\Delta\phi=2^{\frac{3}{2}}D\cdot(q-p)\leq 2^{\frac{3}{2}}D\cdot(q-h+s_1).$$
We also use a simpler version of Lemma~\ref{leGeo} to show that $h-q\geq \frac{1}{\sqrt{2}}a_1$ if $s_2\leq a_2$, which can be seen
by setting $\epsilon=1$ in the proof of the lemma.
We distinguish two cases:

If $s_2\leq a_2$, then
$\Delta\phi \leq 2^{\frac{3}{2}}Ds_1 - 2Da_1$.
Either $a_1=\frac{1}{D}(a_1+a_2)$ and hence
$$C_{Alg}+\Delta\phi= D\cdot\frac{1}{D}(a_1+a_2)+a_2+\Delta\phi\leq 2^{\frac{3}{2}}Ds_1$$
or $a_1=m$ and
$$C_{Alg}+\Delta\phi\leq 2Dm+\Delta\phi\leq 2^{\frac{3}{2}}Ds_1$$
since the algorithm maintains a distance of at most $Dm$ to the request.

For the second case, which is $a_2<s_2$, we make use of
$C_{Alg}\leq p+s_1+q+2s_2$.
If $q\leq(1-\frac{1}{2^{\nicefrac{3}{2}}})$ then $\Delta\phi\leq-Dp$ and
$$C_{Alg}+\Delta\phi\leq s_1+4s_2.$$
Else
$$\begin{array}{rcl}
  \Delta\phi &\leq& 2^{\frac{3}{2}}D(s_1+p+a_1-p) \\
	  &\leq& 2^{\frac{3}{2}}Ds_1 + 2^{\frac{3}{2}}(a_1+a_2) \\
		&\leq& 2^{\frac{3}{2}}Ds_1 + 2^{\frac{3}{2}}(p+s_1+q+a_2) \\
		&\leq& 2^{\frac{5}{2}}Ds_1 + 2^{\frac{3}{2}}(3s_2+4a_2) \\
		&\leq& 36 C_{Opt}
\end{array}$$
and
$$C_{Alg}\leq s_1 + 3q+2s_2\leq Ds_1+8s_2.$$

\end{proof}

%%%%%%%%%%%%%%%%%%%%%%%%%%%%%%%%%%%%%%%%%%%%%%%%%%%%%%%%%%%%%%%%%%%%%%%%%%%%%%%%%%
%%%%%%%%%%%%%%%%%%%%%%%%%%%%%%%%%%%%%%%%%%%%%%%%%%%%%%%%%%%%%%%%%%%%%%%%%%%%%%%%%%

\section{Conclusion}
We have seen that a simple, deterministic algorithm is sufficient to get an almost optimal competitive ratio in various versions of our model.
For the Euclidean space of dimension 1, we have an asymptotically optimal competitive ratio which can not be beaten even by a randomized
algorithm against an oblivious adversary.

For the Euclidean space of higher dimensions, we miss the optimal competitive ratio only by a factor of $\frac{1}{\sqrt{\delta}}$.
We conjecture that this remaining gap between the upper and the lower bound can be closed towards the lower bound, but it remains
an open problem to design an online algorithm and / or provide an analysis to achieve the better competitive ratio.

It seems an interesting question if the idea of limiting the movement of resources within a time slot
also can be applied to other popular models such as the $k$-Server Problem (effectively turning it into
the Page Migration Problem with multiple pages).

It may also be possible to extend existing online problems without a concept of movement by introducing
a limited movement like in our model. In problems like the Online Facility Location Problem, this might
give possibilities to the online algorithms to slightly improve upon decisions where to open a facility.

Furthermore the concept of allowing only a limited configuration change might be applicable to any problem
which belongs to the class of Metrical Task Systems where in every step configurations may be changed to
lower costs for answering a certain type of requests which have to be served by the system.

%%%%%%%%%%%%%%%%%%%%%%%%%%%%%%%%%%%%%%%%%%%%%%%%%%%%%%%%%%%%%%%%%%%%%%%%%%%%%%%%%%
%%%%%%%%%%%%%%%%%%%%%%%%%%%%%%%%%%%%%%%%%%%%%%%%%%%%%%%%%%%%%%%%%%%%%%%%%%%%%%%%%%

\newpage
\bibliographystyle{plain}
\bibliography{references}

\end{document}